\newcommand{\Rmnum}[1]{\expandafter\@slowromancap\romannumeral #1@}
\def\Ex{\mathbb E} 
\def\NumFrames{L}
\def\Np {N_p}
\def\Ps {P_s}
\def\Pr {P_r}
\def\tx {\mathbf{s}}
\def\rx {\mathbf{y}}
\def\TR{\mathbf{D}}
\def\rrx {\mathbf{r}_2}
\def\auto {\mathbf{W}}
\def\Hsr {\mathbf{H}}
\def\hsr {\mathbf{h}}
\def\Hrd {\mathbf{F}}
\def\Hrp {\mathbf{G}}
\def\Hrr {\mathbf{H}_r}
\def\Hsp {\mathbf{h}_{p,\ell}}
\def\Heq{\tilde{\mathbf{H}}}
\def\weq{\tilde{\mathbf{w}}}
\begin{document}

\title{Spectrum Sharing with Distributed Relay Selection and
  Clustering} \date{\today} \author{ Yang Li, {\em Student Member,
    IEEE}, and Aria Nosratinia, {\em Fellow, IEEE}\thanks{Manuscript
    approved by E. Serpedin, Editor for Transmission Systems of the
    IEEE Communications Society. Manuscript Received January 21, 2012; Revised June 8, 2012.}
\thanks{The authors are
    with the University of Texas at Dallas, Richardson, TX 75080, USA
    Email: \{yang, aria\}@utdallas.edu.}\thanks{The work
    in this paper was presented in part at the IEEE International
    Conference on Communications (ICC), June 2012. }  }

\maketitle
\begin{abstract}
We consider a spectrum-sharing network where $n$ secondary relays are
used to increase secondary rate and also mitigate interference on the
primary by reducing the required overall secondary emitted power. We
propose a distributed relay selection and clustering framework, obtain
closed-form expressions for the secondary rate, and show that
secondary rate increases proportionally to $\log n$. Remarkably, this
is on the same order as the growth rate obtained in the {\em absence}
of a primary system and its imposed constraints.  Our results show
that to maximize the rate, the secondary relays must transmit with
power proportional to $n^{-1}$ (thus the sum of relay powers is
bounded) and also that the secondary source may not operate at its
maximum allowable power.  The tradeoff between the secondary rate and
the interference on the primary is also characterized, showing that
the primary interference can be reduced asymptotically to zero as $n$
increases, while still maintaining a secondary rate that grows
proportionally to $\log n$.  Finally, to address the rate loss due to
half-duplex relaying in the secondary, we propose an alternating relay
protocol and investigate its performance.
\end{abstract}

\begin{keywords}
spectrum sharing, cognitive radio, relaying, cooperation, relay selection.
\end{keywords}

\section{Introduction}
\label{sec:introduction}

Spectrum-sharing~\cite{Ghasemi2007,Gastpar2007} allows unlicensed
(secondary) users to share the spectrum of licensed (primary) users as
long as the interference caused on the primary is tolerable. 
This problem is often formulated as maximizing the secondary rate
subject to interference constraints on the primary, or as the dual
problem of minimizing the interference on the primary subject to a
fixed rate for the secondary. Thus, reducing the interference
footprint of the secondary is of paramount interest in spectrum
sharing.
Multihop relaying and cooperative communication is known to
significantly mitigate interference and increase the sum-throughput in
many multi-user scenarios~\cite{Ozgur2007}, among others in broadcast
channels~\cite{Liang2007}, multiple access channels~\cite{Chen2008}
and interference channels~\cite{Sahin2011}. This has motivated the use
of relays in spectrum sharing
networks~\cite{Zhao2009,Mietzner2009,Asghari2010,Li2011,Naeem2010,Zou2010,Lee2011}.

This paper studies a spectrum sharing network consisting of multiple
primary nodes and a secondary system with $M$-antenna source and
destination, and $n$ half-duplex relays. Unlike conventional relay
networks~\cite{Bolcskei2006,Scaglione2003}, the secondary relays must
not only maximize the secondary rate but also control the interference
on the primary, thus new cooperative algorithms are called for. To
achieve this goal we propose and investigate an approach involving
amplify-and-forward (AF) relaying as well as relay selection. Under
the proposed framework a closed-form expression is derived for the
secondary rate, showing that it increases as $(M\log n)/2$.
Furthermore, we propose an augmented scheduling algorithm that
recovers the half-duplex loss and improves the constant factor in the
throughput growth rate. Finally, we characterize the trade-off between
the secondary rate and the primary interference, showing that the
interference on the primary can be reduced asymptotically to zero
while the secondary rate still grows logarithmically with $n$. Our
results suggest that to maximize the secondary rate subject to primary
interference constraints, one must activate a subset of relays that
are chosen based on their interference profile on the primary, each of
the relays transmit with power inversely proportional to $n$, and
the secondary source must operate at a power level potentially below
its maximum available power. These outcomes are unique to the
cognitive relay networks and are distinct from the conventional relay
networks, e.g.,~\cite{Bolcskei2006}.

Some of the related work is as follows. Zhang~et~al.~\cite{Zhang2009}
studied the secondary power allocation under various power and
interference constraints. The throughput limits of spectrum-sharing
broadcast and MAC were analyzed by Li and
Nosratinia~\cite{Yang2010a}. Recently, relaying in spectrum sharing
networks has attracted attention. For secondary outage probability
Zou~et~al.~\cite{Zou2010} and Lee~et~al.~\cite{Lee2011} proved that
the relay selection in spectrum-sharing achieved the same diversity as
conventional relay networks. For decode-and-forward (DF) relaying,
Mietzner~et~al.~\cite{Mietzner2009} studied power allocation subject
to a desired secondary rate, and Asghari and Aissa~\cite{Asghari2010}
analyzed symbol error rate with relay selection. For AF-relaying,
Li~et~al.~\cite{Li2011} selected a single relay to maximize the
secondary rate, and Naeem~et~al.~\cite{Naeem2010} numerically analyzed
a greedy relay selection algorithm. 

\section{System Model}
\label{sec:Model}

We consider a spectrum sharing network consists of $\Np$ primary nodes
and a secondary system with an $M$-antenna source, an $M$-antenna
destination and $n$ single-antenna half-duplex relays, as shown in
Figure~\ref{fig:Model}. The average interference power caused by the
secondary on each of the primary nodes must be less than
$\gamma$~\cite{Zhang2009a}. Let $\Hsr\in\mathcal{C}^{M\times n}$ be
the channel coefficient matrix from the source to the relays, and
$\Hrd\in\mathcal{C}^{n\times M}$ and $\Hrp\in\mathcal{C}^{n\times
  \Np}$ be the channel coefficient matrices from the relays to the
destination and the primary nodes, respectively. Denote
$\Hsp\in\mathcal{C}^{M\times 1}$ as the channel vector from the source
to the primary node $\ell$, $1\le \ell \le \Np$.  The source has no
direct link to the destination, a widely used
model~\cite{Naeem2010,Asghari2010,Jing2006,Bolcskei2006} appropriate
for geometries where the relays are roughly located in the middle of
the source and destination.  A block-fading model is considered where
all entries of $\Hsr$, $\Hrd$, $\Hrp$ and $\Hsp$ are zero-mean
i.i.d. circular symmetric complex Gaussian ($\mathcal{CN}$) with
variance $\sigma_s^2$, $\sigma_d^2$, $\sigma_p^2$ and $\sigma_{sp}^2$,
respectively.

\begin{figure}
\centering \includegraphics[width=3.2in]{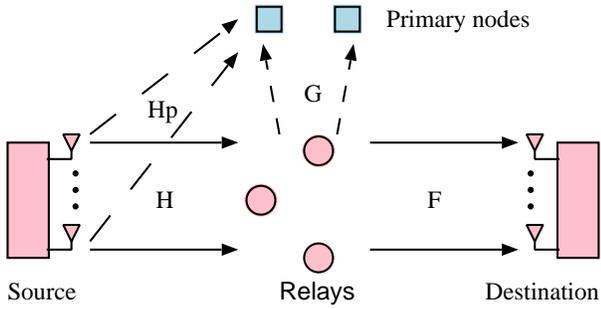}
\caption{System model.}
\label{fig:Model}
\end{figure}

The source communicates with the destination via two hops, which in
general lowers the required transmit power and thus reduces the
interference on the primary. In the first hop, the source sends $M$
independent data streams across $M$ antennas with equal power. The
relay $i$ receives
\begin{equation}
r_i = \sqrt{\frac{\Ps}{M}} \hsr_i^t\,\tx + n_i,
\end{equation}
where $\Ps$ is the source transmit power, which must be less than a
power constraint $\bar{\Ps}$, $\tx\in \mathcal{C}^{M\times 1}$ is
i.i.d. Gaussian signals, $\hsr_i^t\in \mathcal{C}^{1\times M}$ is the
row $i$ of $\Hsr$, namely the channel vector between the relay $i$ and
the source, and $n_i$ is additive noise with distribution
$\mathcal{CN}(0,1)$.

In the second hop, a subset of the relays is selected to transmit to
the destination. We define a random variable $T_i$ to indicate whether
the relay $i$ is selected (eligible):
\begin{equation}
T_i = \begin{cases} 1,  &\mbox{the relay $i$ is eligible}\\ 0, 
\quad &\mbox{otherwise}
\end{cases}.\label{eq:T_i_0}
\end{equation}
No cooperation among the relays is allowed due to their distributed
nature. Each relay rotates and scales $r_i$ by
\begin{equation}
c_i =
e^{j\theta_i}\sqrt{\frac{\Pr}{\Ex[T_i](\Ps\sigma_s^2+1)}}. \label{eq:c_i}
\end{equation}
where $\Pr$ is the average relay power and $\theta_i$ is the rotation
angle, which are designed in the sequel. Therefore, the signal transmitted by
the relay $i$ is
$T_ic_i\,r_i.$

After the relay forwarding, the received signal vector at the
destination is
\begin{align}
\rx & = \sqrt{\frac{\Ps}{M}} \underset{\Heq}{\underbrace{\Hrd \TR
    \Hsr}} \tx + \underset{\weq}{\underbrace{\Hrd \TR \mathbf{n} +
    \mathbf{w}}}, \label{eq:rxd}
\end{align}
where $\TR=\diag(T_1c_1,\cdots,T_nc_n)$ is the relay processing matrix
and $\weq$ is the equivalent additive noise. The equivalent channel
matrix $\Heq$ has entries
\begin{equation}
\big[\Heq]_{mq}  =   \sum_{i=1}^n
T_i\,c_i\,f_{mi}\,h_{iq}, \label{eq:Heq}
\end{equation}
where $f_{mi}$ and $h_{iq}$ are $[\Hrd]_{mi}$ and $[\Hsr]_{iq}$,
respectively.

In this paper, we focus on the effect of the number of relays on the
secondary rate, i.e., the so-called ``scaling laws'' for the relays in
a spectrum-sharing system. Thus, we allow $n$ to increase while $N_p$
remains bounded. Analysis of scaling laws has a long and established
history in wireless communications. Among the many examples we mention
a few, e.g.,~\cite{Gupta2000,Sharif2005,Bolcskei2006}.

We refer to cross channels between secondary transmitters and primary
receivers as {\em interference links}. We assume the destination knows
$\Hrd$, $\TR$ and $\Hsr$, and the relays only know the instantaneous
channel gains to which they directly connect, i.e., $\hsr_i$ and the
column $i$ of $\Hrd$. The interference (thus the channels) from the
primary to the secondary is not explicitly modeled for brevity,
because its impact can be absorbed into the noise term $\tilde{\bw}$.

The cross-channel CSI requirements in a TDD system can be met by the
secondary nodes detecting packets emitted from the primary
nodes. Otherwise, under the spectrum leasing
model~\cite{Jayaweera2009}, the primary nodes can be expected to
actively promote spectrum reuse by transmitting pilots that can be
used for cross-channel gain estimation. The latter model applies to
both TDD and FDD. Regarding the precision of cross-channel CSI, only
the magnitude of the channel gains are needed, and the system can be
made robust to imperfections in the cross-channel CSI to the relays,
as shown in subsequent discussions (see Remark~\ref{remark:CSI}).

\section{Spectrum-Sharing with relay selection and clustering}
\label{sec:relay}

%
Relays that have weak interference links but strong secondary links
are useful for spectrum sharing, while relays that produce a strong
interference on the primary may do more harm than good. Therefore we
use relay selection. In spectrum sharing, relay selection and
allocation of transmit powers are coupled through the interference
constraint, an issue that is not encountered in conventional
(non-spectrum sharing) relaying.  To make the problem tractable, we
propose a two-step approach: first the allowable interference per
relay is bounded, leading to the creation of an eligible relay
set. Then the secondary rate is maximized by selecting appropriate
relays from among the eligible set and coordinating their
transmissions in a manner shown in the sequel.


\subsection{Eligible Relay Selection}
\label{subsec:eligibleRelay}
The interference on the primary nodes is controlled by activating only
the relays with weak interference links. 
We design the relay selection in a distributed manner that does not
require CSI exchange among the relays. A relay is eligible if and only
if all of its own interference link gains are less than a pre-designed
threshold $\alpha$. So from~\eqref{eq:T_i_0}
\begin{equation}
T_i = \begin{cases}
1, \quad  |g_{\ell i}|^2\le \alpha\ \text{for}\ \ell =1,\cdots, \Np\\ 
0, \quad \text{otherwise}
\end{cases}, \label{eq:eligible}
\end{equation}
where $|g_{\ell i}|^2$ is the channel gain between the relay $i$ and
the primary node $\ell$. Note that $\{|g_{\ell i}|^2\}_{\ell,i}$ are
i.i.d. exponentials with mean $\sigma_p^2$, so $\{T_i\}_i$ are
i.i.d. Bernoulli random variables with success probability
\begin{equation}
p = (1-e^{-\alpha/\sigma_p^2})^{\Np}.  \label{eq:eligibleProb}
\end{equation}
Since each relay determines eligibility based on its own interference
links, the eligible relay selection is independent across the relays.
%
The average interference from the secondary system to the primary node
$\ell$ is
\begin{align}
\gamma_{\ell} & = \frac{1}{2}\Ex\big[(\sum_{i=1}^ng_{\ell i}t_i
  )(\sum_{i=1}^ng_{\ell i}^*t_i^*)\big] +
\frac{\Ps}{2M}\Ex\big[|\Hsp|^2\big] 
\\ & \ = \frac{\Pr}{2}
  \,\sum_{i=1}^n \Ex\big[ |g_{\ell i}|^2\big | T_i = 1\big]
 +  \frac{\sigma_{sp}^2\Ps}{2} , \label{eq:IntConstraint1}
\end{align}
where the factor $\frac{1}{2}$ is due to the fact that the relays and
the source only transmit during half of the time. The second equality
holds since the design of $\theta_i$ is independent of interference
links, as shown soon. Since $T_i=1$ implies $|g_{\ell i}|^2\le \alpha$
$\forall\ell$, we have
\begin{align}
\Ex\big[ |g_{\ell i}|^2\big | T_i  = 1\big]
& <\int_0^{\alpha}\frac{xe^{-x/\sigma_p^2}}{\sigma_p^2}dx 
 = \sigma_p^2 - e^{-\alpha/\sigma_p^2}(\alpha +
\sigma_p^2 )
\\&\stackrel{\Delta}{=} f(\alpha). \label{eq:ExG}
\end{align}
Combining~\eqref{eq:IntConstraint1} and~\eqref{eq:ExG}, we have
$\forall \ell$, $\gamma_{\ell}\le \gamma$ if $\alpha$ and $\Pr$
satisfy
\begin{equation}
n \Pr f(\alpha) \le \max(\gamma_r,0), \label{eq:IntConstraint}
\end{equation}
where $\gamma_r=2\gamma - \sigma_{sp}^2\Ps$. As long
as~\eqref{eq:IntConstraint} holds, the interference on all the primary
nodes is ensured to be less than $\gamma$, although the relays are
selected distributedly. In our two-hop communication the source power
$\Ps$ is chosen so that $\gamma_r>0$, and otherwise the secondary rate
is zero.

\begin{remark}
\label{remark:CSI}
We briefly discuss CSI uncertainty in the CSI of relay cross-channel
gains. Denote the (relay) estimated cross channel gain as
$|\hat{g}_{\ell i}|^2$. For simplicity, consider $|\hat{g}_{\ell
  i}|^2$ has the same exponential distribution as the true channel
gain $|g_{\ell i}|^2$. Assume uncertainty can be modeled as an
interval, e.g., that the true cross-channel gain is in the interval
$[0, (1+\epsilon)|\hat{g}_{\ell i}|^2]$ for some known and fixed
$\epsilon$. In this case, if $\alpha$ and $\Pr$ satisfy
\begin{equation}
n\Pr f(\alpha +\epsilon) \le \max(\gamma_r,0), \nonumber
\end{equation}
the interference constraints on the primary will still be
ensured. Since $f(\cdot)$ is an increasing and bounded function, the
impact of uncertainty $\epsilon$ is to reduce the transmit power at
the relays.
\end{remark}

\subsection{Distributed Relay Clustering}
\label{subsec:relayClustering}
The second part of the proposed method aims to maximize the secondary
rate. Recall that the source and destination have $M$ antennas each;
the relays are divided correspondingly into $M$ groups
$\{\mathcal{G}_m \; , \; 1\le m \le M\}$, where each group of relays
aims to provide a virtual pipe between one of the source antennas and
the corresponding destination antenna. This channel-diagonalization
approach is reminiscent of~\cite{Bolcskei2006} but requires more
sophisticated analysis because the (eligible) relay set is random, as
shown in the sequel.

The relay $i\in \mathcal{G}_m$ rotates the received
signal by $\theta_i$ such that
\begin{equation}
 e^{j\theta_i} f_{mi}h_{im}= |f_{mi}||h_{im}|. \label{eq:theta}
\end{equation}
In this case, all the relays in $\mathcal{G}_m$ forward the signal
sent by the source-antenna $m$ coherently to the destination-antenna
$m$.

Now, the challenge is to decide the assignment of relays to the group
$\mathcal{G}_m$, for $1\le m\le M$. We focus on distributed methods so
that the coordination among relays is reduced. 
In addition, we
decouple the relay clustering from the relay selection: the relays
decide their groups according to their source-relay and
relay-destination channels but independent of the interference
links. Therefore, under this framework, $\{\theta_i\}_{i=1}^n$ and
$\{T_i\}_{i=1}^n$ are mutually independent. This decoupling allows us
to leverage existing relaying methods to enhance the secondary rate
while bounding the primary interference. It also greatly simplifies
the analysis.

We shall consider two clustering schemes:
\subsubsection{Fixed Clustering}
Here, each of the groups has $n/M$ relays.\footnote{We assume the
  number of relays $n$ is so that $n/M$ is an integer, however, this
  restriction is not essential and can be
  relaxed~\cite{Bolcskei2006}.} Subject to this condition, the relays
are assigned to the groups in a pre-defined manner. Without
loss of generality, we assume: $$ \mathcal{G}_m =
\bigg\{i:\frac{(m-1)n}{M} +1 \le i \le \frac{mn}{M}, \;1\le m\le
M. \bigg\}. $$ 

\subsubsection{Gain Clustering}
In this clustering we have $$ \mathcal{G}_m = \bigg\{i: |h_{im}|>
|h_{iq}|,\;\ q\neq m,\,1\le q\le M\bigg\}.$$ In other words, the groups
are assigned based on the relays' channel gain to source antennas. A
relay (distributedly) decides to join in the group $m$ if its gain to
the $m$-th source antenna is the stronger than any other channel
gains. The group assignment of relays is independent from each other
and is also independent of relay eligibility.
Note that $\mathcal{G}_m$ is no longer fixed but depends on the
source-relay channels. Because all channels are i.i.d., a relay has
equal probability of choosing any of the groups. Therefore
$|\mathcal{G}_m|$ (the cardinality of $\mathcal{G}_m$) is
binomially distributed with parameters $(n,\frac{1}{M})$.

\section{Secondary Rate in Spectrum-sharing with Relays}
\label{sec:rate}

We first derive a general closed-form expression for the secondary
rate under the proposed framework, and then evaluate the achievable
rate for specific methods.

\subsection{Calculation of Secondary Rate}
From~\eqref{eq:rxd}, conditioned on $\Hrd$, $\TR$ and $\Hsr$, $\weq$
is a Gaussian vector with autocorrelation
\begin{equation}
\mathbf{W} = \bI + \Hrd \TR\TR^{\dag} \Hrd^{\dag}. \label{eq:weq}
\end{equation}
The secondary rate in the presence of $n$ relays is denoted with $R_n$
and is given by:
\begin{equation}
R_n = \frac{1}{2}\log \det\bigg( \bI+\frac{\Ps}{M}
\Heq\Heq^{\dag}\mathbf{W}^{-1}\bigg), \label{eq:eqNoiseAutoC}
\end{equation}
where $\frac{1}{2}$ is due to the half-duplex relay constraint.

Now, we find $R_n$ for large $n$. First, from~\eqref{eq:Heq}
and~\eqref{eq:theta}, the entry of $\Heq$ is
\begin{align}
\big[\Heq]_{mq} & = \begin{cases} A_{mm} + B_{mm}, & \mbox{$q=m$} 
\\ C_{mq}, & \mbox{$q\neq m$}
\end{cases},
\end{align}
where
\begin{align}
A_{mm} & = \sqrt{\frac{\Pr}{p(\sigma_s^2\Ps+1)}}\sum_{i\in
    \mathcal{G}_m} 
  T_i\,|f_{mi}|\,|h_{im}|,\nonumber
\\ B_{mm}& = \sqrt{\frac{\Pr}{p(\sigma_s^2\Ps+1)}} \sum_{i\notin \mathcal{G}_m} T_i\,f_{mi}\,h_{im}\,e^{j\theta_i},\nonumber\\
C_{mq} & = \sqrt{\frac{\Pr}{p(\sigma_s^2\Ps+1)}} \sum_{i=1}^n
    T_i\,f_{mi}\,h_{iq}\,e^{j\theta_i}. 
\end{align}
The terms in $A_{mm}$, $B_{mm}$ and $C_{mq}$ are mutually independent,
because $\{T_i\}_{i=1}^n$ and $\{\theta_i\}_{i=1}^n$ are independent
from each other. So we have the following lemma.

\begin{lemma}
\label{lemma:Heq}
If $\min_{1\le m\le M} |\mathcal{G}_m|
\stackrel{w.p.1}{\longrightarrow}\infty$ as $n\rightarrow\infty$, we
have
\begin{align}
\frac{A_{mm}}{n} - \frac{1}{n}\sqrt{\frac{p\Pr}{\sigma_s^2\Ps+1}}\sum_{i\in
  \mathcal{G}_m} \Ex[|f_{mi}||h_{im}|]
&\stackrel{w.p.1}{\longrightarrow} 0,\label{eq:Amm}\\
\frac{B_{mm}}{n} - \frac{1}{n}\sqrt{\frac{p\Pr}{\sigma_s^2\Ps+1}}\sum_{i\notin \mathcal{G}_m}
\Ex[f_{mi}h_{im}e^{-j\theta_i}]
&\stackrel{w.p.1}{\longrightarrow} 0, \label{eq:Bmm}\\
\frac{C_{mq}}{n} -\frac{1}{n}\sqrt{\frac{p\Pr}{\sigma_s^2\Ps+1}}\sum_{i=1}^n
\Ex[f_{mi}h_{iq}e^{-j\theta_i}]
&\stackrel{w.p.1}{\longrightarrow}  0 . \label{eq:Cmq}
\end{align}
\end{lemma}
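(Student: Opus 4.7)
The plan is to apply Kolmogorov's strong law of large numbers (SLLN) for independent random variables of bounded variance to each of the three normalized sums. I would first rewrite each sum indexed by $i\in\mathcal{G}_m$ as a sum over the fixed index set $i\in\{1,\dots,n\}$ by absorbing the cluster-membership indicator $\mathbf{1}_{\{i\in\mathcal{G}_m\}}$ into the summand. The independence assumptions of the paper then guarantee that the resulting summands are independent across $i$: the interference-link Bernoulli variables $T_i$ are i.i.d.\ and independent of everything else; the rotation angles $\theta_i$ depend only on the source/destination channels at relay $i$ within its own group; and the cluster indicator $\mathbf{1}_{\{i\in\mathcal{G}_m\}}$ is either deterministic (fixed clustering) or a function only of $\{h_{iq}\}_{q=1}^M$ at the same relay $i$ (gain clustering). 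Hence independence across $i$ is preserved after absorbing the indicator.

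Next, I would verify the moment hypothesis. The factors appearing in the summands, such as $|f_{mi}||h_{im}|$ and $f_{mi}h_{iq}e^{j\theta_i}$, are products of Rayleigh or complex Gaussian variables and unit-modulus factors, hence have finite second moments uniformly in $i$. Kolmogorov's condition $\sum_{i=1}^\infty \mathrm{Var}(Z_i)/i^2 < \infty$ is then immediate, and the SLLN yields $(1/n)\sum_{i=1}^n(Z_i-\Ex[Z_i]) \stackrel{w.p.1}{\longrightarrow} 0$ for each of the three sums. To reconcile with the centering stated in the lemma I would use $\Ex[T_i]=p$ together with the independence of $T_i$ from $(f_{mi},h_{im},\mathbf{1}_{\{i\in\mathcal{G}_m\}})$ to factor one $p$ out of $\Ex[\mathbf{1}_{\{i\in\mathcal{G}_m\}}T_i|f_{mi}||h_{im}|]$; combined with the prefactor $\sqrt{\Pr/[p(\sigma_s^2\Ps+1)]}$ this produces the $\sqrt{p\Pr/(\sigma_s^2\Ps+1)}$ displayed in \eqref{eq:Amm}. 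For \eqref{eq:Bmm} and \eqref{eq:Cmq} the cross-expectations $\Ex[f_{mi}h_{im}e^{-j\theta_i}]$ and $\Ex[f_{mi}h_{iq}e^{-j\theta_i}]$ appear directly; as a bonus one observes they in fact vanish because $e^{j\theta_i}$ depends on channels independent of the zero-mean Gaussian $f_{mi}$ in both cases, which is precisely the off-diagonal decoupling that drives the scaling-law analysis downstream.

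The main technical obstacle is the gain-clustering case, where $\mathcal{G}_m$ is random and the conditional distribution of $|h_{im}|$ given $i\in\mathcal{G}_m$ is biased away from the unconditional Rayleigh law, so a direct SLLN over the random index set is awkward. The indicator-absorbing rewrite resolves this cleanly: the sum now runs over the deterministic range $i=1,\dots,n$ with summands that remain independent across $i$, so Kolmogorov's SLLN applies verbatim regardless of how $\mathcal{G}_m$ is constructed. The hypothesis $\min_m|\mathcal{G}_m|\stackrel{w.p.1}{\longrightarrow}\infty$ enters precisely to rule out the degenerate alternative in which the cluster stays bounded; by symmetry $|\mathcal{G}_m|/n\to 1/M$ almost surely in gain clustering, confirming that dividing by $n$ (rather than by $|\mathcal{G}_m|$) is the right normalization so that the empirical sum and its expectation on the two sides of the claimed difference are comparable in the limit.
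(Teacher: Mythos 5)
Your argument is correct, and it reaches the same destination as the paper by a slightly different route. The paper's (omitted) proof invokes two ingredients: Kolmogorov's SLLN for independent, non-identically distributed variables (Serfling, Thm.~1.8.D) \emph{and} a theorem on sums with a random number of terms (Gut, Thm.~2.1), the latter being what handles the random index set $\mathcal{G}_m$ under gain clustering and is why the hypothesis $\min_m|\mathcal{G}_m|\stackrel{w.p.1}{\longrightarrow}\infty$ appears. Your indicator-absorption rewrite eliminates the need for the second ingredient entirely: once $\mathbf{1}_{\{i\in\mathcal{G}_m\}}$ is folded into the summand, the summands are independent across $i$ (each is a function only of channels attached to relay $i$), have uniformly bounded second moments, and Kolmogorov's condition holds, so a single application of the deterministic-index SLLN suffices. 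This is arguably cleaner, and it makes visible that the hypothesis on $|\mathcal{G}_m|$ is not really needed for the convergence itself, only for the downstream claim that the limiting diagonal entries $a_m$ are nonzero. The one step you should make explicit rather than leave as a remark is the passage from your deterministic centering $\frac{1}{n}\sum_{i=1}^n\Ex[\mathbf{1}_{\{i\in\mathcal{G}_m\}}T_i\,|f_{mi}||h_{im}|]$ to the lemma's random centering $\frac{p}{n}\sum_{i\in\mathcal{G}_m}\Ex[|f_{mi}||h_{im}|]$: under gain clustering the latter expectation must be read as conditional on $i\in\mathcal{G}_m$ (this is how the paper uses it in \eqref{eq:a_mGain}, where $\Ex[|h_{im}|]$ becomes the mean $\mu_h$ of the maximum of $M$ Rayleighs), and the two centerings agree asymptotically only after invoking $|\mathcal{G}_m|/n\stackrel{w.p.1}{\longrightarrow}1/M$, which is itself a separate SLLN applied to the i.i.d.\ membership indicators. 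With that bookkeeping spelled out, the proof is complete; your observation that $\Ex[f_{mi}h_{im}e^{-j\theta_i}]$ and $\Ex[f_{mi}h_{iq}e^{-j\theta_i}]$ vanish because $f_{mi}$ is zero-mean and independent of the phase is also correct and is exactly what the paper exploits in \eqref{eq:HeqDiag}.
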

\begin{proof}
The proof follows from~\cite[Theorem 2.1]{Gutbook} and~\cite[Theorem
  1.8.D]{Serflingbook}, and is omitted here. 
\end{proof}

From Lemma~\ref{lemma:Heq}, given
$|\mathcal{G}_m|\stackrel{w.p.1}{\longrightarrow}\infty$ $\forall m$,
we have:
\begin{equation}
\frac{\Heq}{n}- \diag(a_1,\cdots,a_M)
\stackrel{w.p.1}{\longrightarrow} 0, \label{eq:HeqDiag}
\end{equation}
where
\begin{equation}
a_m = \frac{1}{n}\sqrt{\frac{p\Pr}{(\sigma_s^2\Ps+1)}}\sum_{i\in \mathcal{G}_m}
\Ex[|f_{mi}||h_{im}|] .\label{eq:am}
\end{equation}

The above analysis indicates that $\Heq$ converges to a diagonal
matrix for large $n$ (with probability 1). We now show that $\mathbf{W}$ is
also diagonalized as $n$ increases. From~\eqref{eq:weq}, we have
\begin{equation}
\big[\mathbf{W}\big]_{mq} = \sum_{i=1}^n\frac{T_i\Pr
  f_{mi}f_{iq}^*}{p(\Ps\sigma_s^2+1)} + \delta_{mq},
\end{equation}
where $\delta_{mq}=1$ if $m=q$ and $\delta_{mq}=0$ if $m\neq q$. One
can verify Kolmogorov conditions~\cite[Theorem 1.8.D]{Serflingbook},
and therefore obtain
\begin{align}
\frac{\big[\mathbf{W}\big]_{mq}}{n} -
\frac{1}{n} \bigg(\sum_{i=1}^n\frac{\Pr\Ex
  [f_{mi}f_{iq}^*]}{\Ps\sigma_s^2+1} +\delta_{mq}\bigg) 
& \stackrel{w.p.1}{\longrightarrow} 0,
\end{align}
where
\begin{equation}
\Ex[f_{mi}f_{iq}^*] = \begin{cases}
  \Ex[|f_{mi}|^2], &\mbox{$m=q$ }\\
0, & \mbox{$m\neq q$}
\end{cases}.
\end{equation}
Therefore, we have
\begin{equation}
\frac{\mathbf{W}}{n} - \diag(b_1,\cdots,b_M)
\stackrel{w.p.1}{\longrightarrow} 0, \label{eq:WeqDiag}
\end{equation}
where
\begin{equation}
b_m = \frac{\Pr \sum_{i=1}^n\Ex[|f_{mi}|^2]}{n(\Ps\sigma_s^2+1)} +
\frac{1}{n}. \label{eq:bm}
\end{equation}


From~\eqref{eq:HeqDiag} and~\eqref{eq:WeqDiag}, for large $n$, the
end-to-end channel between the source and the destination is
approximately decoupled into $M$ parallel channels under the proposed
framework, where the channel coefficient $m$ is $a_m$ and the received
noise has variance $b_m$. The capacity of this parallel channel is
\begin{equation}
\overline{R} = \frac{1}{2}\sum_{m=1}^M\log\big(1+\frac{n\Ps a_m^2}{Mb_m}\big),\label{eq:R_n}
\end{equation}
Therefore, it is reasonable to expect that $R_n\approx \overline{R}$
for large $n$. After some calculation (omitted for brevity),
we obtain the following result.
\begin{theorem}
\label{theorem:R_n}
Consider a secondary system with an $M$-antenna source, an $M$-antenna
destination, and $n$ single-antenna relays, in the presence of $N$
primary nodes each tolerating interference no more than $\gamma$. The
secondary rate satisfies
\begin{equation}
R_n-\overline{R} \stackrel{w.p.1}{\longrightarrow}  0,\quad
n\rightarrow \infty,\label{eq:asconvergence}
\end{equation}
under the proposed relay selection and clustering framework.
\end{theorem}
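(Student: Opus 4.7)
The plan is to rewrite both $R_n$ and $\overline{R}$ as $M\times M$ log-determinants of the form $\tfrac{1}{2}\log\det(\bI+\tfrac{nP_s}{M}Y)$, use Lemma~\ref{lemma:Heq} and \eqref{eq:WeqDiag} to show the two $Y$-matrices converge almost surely to a common positive-definite diagonal limit, and then exploit the structure of $\log(1+n\,\cdot\,)$ to close the gap despite the fact that each of $R_n$ and $\overline{R}$ grows as $\Theta(\log n)$. Concretely, I factor the $n$-scaling out of $R_n$ by defining
\[
Y_n\;:=\;\frac{\Heq}{n}\,\frac{\Heq^{\dag}}{n}\Big(\frac{\mathbf{W}}{n}\Big)^{\!-1},
\]
so that $R_n=\tfrac{1}{2}\log\det(\bI+\tfrac{nP_s}{M}Y_n)$, and set $Y_n^{0}:=\diag(a_1^2/b_1,\ldots,a_M^2/b_M)$ so that $\overline{R}=\tfrac{1}{2}\log\det(\bI+\tfrac{nP_s}{M}Y_n^{0})$. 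Since $M$ is fixed, the eigenvalue bookkeeping stays finite-dimensional.

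Second, I would verify Lemma~\ref{lemma:Heq} is applicable by checking $\min_m|\mathcal{G}_m|\to\infty$ w.p.~$1$: trivially for fixed clustering, and by the strong law for the binomial$(n,1/M)$ sizes under gain clustering. Combining \eqref{eq:HeqDiag} and \eqref{eq:WeqDiag} with the observation that $a_m$ and $b_m$ in \eqref{eq:am}--\eqref{eq:bm} themselves tend a.s.\ to strictly positive limits $a_m^{*}$ and $b_m^{*}$ (using $|\mathcal{G}_m|/n\to 1/M$ in the gain case, while $b_m$ is even deterministic), continuity of matrix inversion and multiplication delivers
\[
Y_n\stackrel{\mathrm{w.p.1}}{\longrightarrow}Y^{*}\;=\;\diag\!\big((a_m^{*})^{2}/b_m^{*}\big),\qquad Y_n^{0}\stackrel{\mathrm{w.p.1}}{\longrightarrow}Y^{*},
\]
with $Y^{*}$ positive definite.

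Third, I would compare the log-determinants by pulling $\tfrac{nP_s}{M}$ out of each:
\[
\log\det\!\Big(\bI+\tfrac{nP_s}{M}Y_n\Big)=M\log\tfrac{nP_s}{M}+\log\det\!\Big(Y_n+\tfrac{M}{nP_s}\bI\Big),
\]
and likewise for $Y_n^{0}$. The $M\log(nP_s/M)$ terms cancel on subtraction, leaving
\[
R_n-\overline{R}\;=\;\tfrac{1}{2}\log\det\!\Big(Y_n+\tfrac{M}{nP_s}\bI\Big)-\tfrac{1}{2}\log\det\!\Big(Y_n^{0}+\tfrac{M}{nP_s}\bI\Big),
\]
whose two arguments both converge a.s.\ to the nonsingular $Y^{*}$. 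Continuity of $\log\det$ on the open cone of positive-definite matrices forces the difference to $0$ a.s., which is exactly \eqref{eq:asconvergence}.

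The hard part is this last step. A naive perturbation estimate of the form $|\log\det(\bI+nX)-\log\det(\bI+nX')|\le c\|X-X'\|$ cannot close the gap, because the factor $n$ inside the determinant amplifies an $o(1)$ matrix perturbation into a potential $O(1)$ log-determinant discrepancy -- consistent with the fact that each of $R_n$ and $\overline{R}$ grows like $\tfrac{M}{2}\log n$. Factoring $n$ outside the determinant converts the dangerous multiplicative $n$ into a benign additive $1/n$ regularizer inside $\log\det$, but this is legitimate only when $Y^{*}$ is nonsingular. Establishing $a_m^{*},b_m^{*}>0$ under both clustering schemes is therefore the pivotal check that makes the whole argument go through.
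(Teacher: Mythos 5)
Your proposal is correct and follows the same route the paper sketches---diagonalizing $\Heq/n$ via Lemma~\ref{lemma:Heq} and $\mathbf{W}/n$ via \eqref{eq:WeqDiag}, then passing to the parallel-channel capacity---while rigorously supplying the ``calculation omitted for brevity.'' In particular, your device of factoring $nP_s/M$ out of both log-determinants so that the comparison reduces to continuity of $\log\det$ at the nonsingular limit $\diag\big((a_m^{*})^2/b_m^{*}\big)$ is exactly the step needed to defeat the $\Theta(\log n)$ growth of each term, and your identification of $a_m^{*},b_m^{*}>0$ as the pivotal hypothesis is on target.
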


\subsection{Achievable Rate under Specific Clustering Schemes} 

We apply Theorem~\ref{theorem:R_n} to fixed clustering and
gain clustering.

\subsubsection{Fixed Clustering}
In this scheme, $|\mathcal{G}_m|=\frac{n}{M}$ (so
Lemma~\ref{lemma:Heq} is applicable), and $|f_{mi}|$ and $|h_{im}|$
are i.i.d. Rayleigh random variables with mean
$\frac{\sigma_d\sqrt{\pi}}{2}$ and $\frac{\sigma_s\sqrt{\pi}}{2}$,
respectively. Therefore, from~\eqref{eq:am}, 
$
a_m = \frac{\pi\sigma_s\sigma_d}{4M}\sqrt{\frac{p\Pr}{\sigma_s^2\Ps+1}},
$
for $1\le m \le M$. Under this clustering, $|f_{mi}|^2$ is
i.i.d. exponential with mean $\sigma_d^2$, and we have
$
b_m = \frac{\sigma_d^2\Pr}{\sigma_s^2\Ps+1}+\frac{1}{n}, 
$
for $1\le m \le M$. Substituting $a_m$ and $b_m$ into~\eqref{eq:R_n},
$\overline{R}$ becomes
\begin{equation}
R^{(f)} = \frac{M}{2}\log\bigg( 1+ \frac{np
  \pi^2\sigma_s^2\sigma_d^2\Pr\Ps}{16M^3(\sigma_d^2\Pr+n^{-1}(\sigma_s^2\Ps+1))}
\bigg). \label{eq:R^{(f)}}
\end{equation}
From Theorem~\ref{theorem:R_n}, under fixed clustering, we have:
$
R_n - R^{(f)} \stackrel{w.p.1}{\longrightarrow} 0.
$

\subsubsection{Gain Clustering}
Since $|\mathcal{G}_m|$ is binomially distributed with parameters
$(n,\frac{1}{M})$, we have
$|\mathcal{G}_m|/n\stackrel{w.p.1}{\longrightarrow} 1/M$, and
Lemma~\ref{lemma:Heq} is again applicable. Due to the independence of
$|f_{mi}|$ and $|h_{im}|$, from~\eqref{eq:am}, we have
\begin{equation}
a_m = \frac{1}{n}\sqrt{\frac{p\Pr}{(\sigma_s^2\Ps+1)}}\sum_{i\in \mathcal{G}_m}
\Ex[|f_{mi}|]\Ex[|h_{im}|] .  \label{eq:a_mGain}
\end{equation}
where $\Ex[|f_{mi}|]=\frac{\sigma_d\sqrt{\pi}}{2}$ (i.i.d. Rayleigh)
and $\Ex[|h_{im}|]=\max_{1\le m \le M}|h_{mi}|$, which is the maximum
of $M$ i.i.d. Rayleigh random variables. We have
\begin{align}
\mu_h& = \Ex\big[\max_{1\le m\le M} |h_{im}|\big] \nonumber\\
&= \int_0^{\infty}
\frac{2Mx^2}{\sigma_s^2}e^{-x^2/\sigma_s^2}\big(1-e^{-x^2/\sigma_s^2}\big)^{M-1}\,dx\\
& = \sum_{m=0}^{M-1}(-1)^{M-m-1}\big(_{\ \ m}^{M-1}\big)
\frac{\sigma_sM\Gamma(\frac{3}{2})}{(M-m)^{3/2}}. \label{eq:mu_h}
\end{align}
Note that $\mu_h=\frac{\sigma_s\sqrt{\pi}}{2}$ for $M=1$ (no selection
is needed), which is identical to the fixed clustering. Based
on~\eqref{eq:a_mGain} and $|\mathcal{G}_m|/n
\stackrel{w.p.1}{\rightarrow} 1/M$, we have
$
a_m- \frac{\sigma_d\mu_h}{2M}\sqrt{\frac{p\pi\Pr}{(\sigma_s^2\Ps+1)}}
\stackrel{w.p.1}{\longrightarrow} 0.
$

Under this clustering, $b_m$ remains the same as the fixed clustering
case, since $|f_{mi}|$ is still i.i.d. Rayleigh for $i\in
\mathcal{G}_m$, $\forall m$. Substituting $a_m$ and $b_m$
into~\eqref{eq:R_n}, we have
\begin{equation}
R^{(g)} =  \frac{M}{2}\log\bigg( 1+ \frac{np
  \pi\mu_h^2\sigma_d^2\Pr\Ps}{4M^3(\sigma_d^2\Pr+n^{-1}(\sigma_s^2\Ps+1))}\bigg),\label{eq:R^{(g)}}
\end{equation}
then:
$
R_n - R^{(g)} \stackrel{w.p.1}{\longrightarrow} 0.
$

\section{Optimal Power Strategy for Spectrum-sharing with relays}
\label{sec:optimal}

In general, one may envision two competing philosophies for relay
selection: (1) Allow only relays that have extremely weak interference
links to the primary. Only very few relays will qualify but each of
them can transmit at high power.
(2) Allow a large number of relays to be activated. In this case the relay powers
must be lowered because not all interference links are as ``good'' as
the previous case. 

The key question is: which approach is better? Should we use a few select relays with
excellent interference profiles, or more relays operating at lower
power?
In this section, we optimize the threshold $\alpha$, the relay power
$\Pr$ and the source power $\Ps$, while bounding the primary
interference. The results of this section show that in general the
balance tips in favor of having more eligible relays operating at low
power.

\subsection{Optimal Design of $\alpha$ and $\Pr$}

Consider a fixed $P_s$. Since $\alpha$
and $\Pr$ depend on each other via~\eqref{eq:IntConstraint}, given
$\alpha$ the maximum $\Pr$ is
\begin{equation}
\Pr=\frac{\gamma_r}{nf(\alpha)}.   \label{eq:Pr}
\end{equation}
Substituting~\eqref{eq:Pr} and~\eqref{eq:eligibleProb}
into~\eqref{eq:R^{(f)}} and~\eqref{eq:R^{(g)}} shows that $R^{(f)}$
and $R^{(g)}$ attain their maxima (as a function of $\alpha$) at
$\alpha=\alpha_0$ where:
\begin{equation}
\alpha_o = \arg\max_{\alpha}\frac{\gamma_r\Ps(1-e^{-\alpha/\sigma_p^2})^{\Np}}{\gamma_r \sigma_d^2 +
  (\sigma_s^2\Ps+1)f(\alpha)}. \label{eq:optAlpha}
\end{equation}

A closed-form solution for $\alpha_o$ is unavailable but numerical
solution can be easily obtained. Figure~\ref{fig:alpha} shows the
optimal design of $\alpha$ based on~\eqref{eq:optAlpha}. For both fixed clustering and
gain clustering, according to~\eqref{eq:optAlpha}, $\alpha_o=1.7$
maximizes the secondary rate. 

Now, we characterize the asymptotic behavior of $\alpha_o$,
equivalently the optimal $\Pr$. Because~\eqref{eq:optAlpha} is
independent of $n$, the optimal threshold $\alpha$ is not a function
of $n$. So from~\eqref{eq:Pr} the optimal average transmit
power\footnote{It can be shown that Theorem~\ref{theorem:R_n} still
  holds if $\Pr$ scales as $\Theta(n^{-1})$.} is $\Pr=\Theta(n^{-1})$,
i.e., there exist real constants $d_1,d_2>0$ so that $ d_1 n^{-1}\le \Pr
\le d_2 n^{-1}$. This implies that the secondary system should on
average allow many relays to operate at low power. One may intuitively
interpret this result as follows. To comply with the primary
interference constraints, the sum power of relays must be bounded, and
by spreading the total power among more relays better beamforming gain
is achieved via coherent transmission.

\begin{figure}
\centering \includegraphics[width=3.5in]{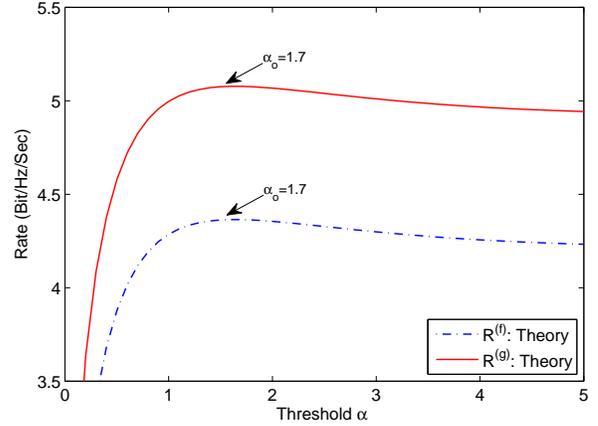}
\caption{Optimal value of selection threshold $\alpha$ under $P_s=5, n=100$}
\label{fig:alpha}
\end{figure}

Now, we study the scaling of the secondary rate. Consider examples
with fixed clustering (Eq.~\eqref{eq:R^{(f)}}) and gain clustering
(Eq.~\eqref{eq:R^{(g)}}). If $\Pr=\Theta(n^{-1})$, for fixed $\alpha$
(not necessarily optimal), we have
\begin{equation}
R^{(f)} = \frac{M}{2}\log n + C_1, \quad R^{(g)} = \frac{M}{2}\log n + C_1 +
C_2, \label{eq:RScale}
\end{equation}
where $$C_1=\frac{M}{2}\log\frac{
  p\pi^2\sigma_s^2\sigma_d^2\gamma_r\Ps} {16M^3 (\sigma_s^2\Ps+1)
  (\sigma_p^2\gamma_r + f(\alpha))}$$ and
$C_2=\log\frac{4\mu_h^2}{\pi\sigma_s^2}$. One can view $C_2$  as
multi-antenna diversity gain by selecting over source-relay
channels. From~\eqref{eq:RScale}, the secondary rate increases as
$(M\log n)/2$, which is summarized in the next theorem.

\begin{theorem}
\label{theorem:RScale}
Consider a secondary system with an $M$-antenna source, an $M$-antenna
destination, and $n$ single-antenna relays, in the presence of $N$
primary nodes each tolerating interference no more than $\gamma$. For
$\Pr=\Theta(n^{-1})$ and fixed $\alpha$, the secondary rate satisfies
\begin{equation}
\frac{R_n}{\frac{M}{2}\log n} \stackrel{w.p.1}{\longrightarrow} 1,
\end{equation}
under the proposed framework with both fixed clustering and  gain
clustering.
\end{theorem}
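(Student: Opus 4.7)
The plan is to deduce Theorem~\ref{theorem:RScale} directly from Theorem~\ref{theorem:R_n} together with the closed-form expressions~\eqref{eq:R^{(f)}} and~\eqref{eq:R^{(g)}}, reducing the proof to a scaling analysis of the logarithm.

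First, I would invoke Theorem~\ref{theorem:R_n} (in the extended form mentioned in the footnote, which permits $\Pr=\Theta(n^{-1})$) to replace $R_n$ almost surely by $R^{(f)}$ under fixed clustering and by $R^{(g)}$ under gain clustering, up to a term that vanishes as $n\to\infty$. This reduces the claim to showing $R^{(f)}/(\tfrac{M}{2}\log n)\to 1$ and likewise for $R^{(g)}$.

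Next, writing $\Pr=c/n$ for some $c>0$, the numerator of the fraction inside the logarithm in~\eqref{eq:R^{(f)}} is proportional to $n\Pr=\Theta(1)$, while the denominator $\sigma_d^2\Pr + n^{-1}(\sigma_s^2\Ps+1)$ is $\Theta(n^{-1})$; hence the argument of the logarithm is $\Theta(n)$. Consequently $R^{(f)}=\tfrac{M}{2}\log n + O(1)$, and the same calculation applied to~\eqref{eq:R^{(g)}} gives $R^{(g)}=\tfrac{M}{2}\log n + O(1)$. Dividing through by $\tfrac{M}{2}\log n$ and letting $n\to\infty$ absorbs the bounded additive term into $o(1)$ and yields the stated limit. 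This step is essentially bookkeeping and the constants $C_1$ and $C_2$ of~\eqref{eq:RScale} can be read off for free from the limiting ratios, although they are not needed for the theorem as stated.

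The main obstacle is justifying the extension of Theorem~\ref{theorem:R_n} to the regime $\Pr=\Theta(n^{-1})$, a fact claimed in the footnote but not proved there. Because $\sqrt{\Pr}$ appears as a multiplicative prefactor in $A_{mm}$, $B_{mm}$ and $C_{mq}$, an $n$-dependent $\Pr$ alters the natural scaling of these entries, so the almost sure convergence in Lemma~\ref{lemma:Heq} would have to be rederived by checking that the Kolmogorov-type conditions of~\cite[Theorem 1.8.D]{Serflingbook} remain satisfied after the $\sqrt{\Pr}$ factor is absorbed into the summands, and similarly that $\mathbf{W}/n$ retains its diagonal limit when the $\Pr$-weight in~\eqref{eq:weq} decays like $1/n$. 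Once that extension is in hand, the parallel-channel representation underlying~\eqref{eq:R_n} still holds, and the scaling computation above completes the proof for both clustering schemes.
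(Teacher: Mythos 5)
Your proposal follows essentially the same route as the paper: invoke Theorem~\ref{theorem:R_n} to replace $R_n$ by $R^{(f)}$ or $R^{(g)}$, substitute $\Pr=\Theta(n^{-1})$ into the closed forms to get $\tfrac{M}{2}\log n + O(1)$ (the paper's equation~\eqref{eq:RScale} with constants $C_1$, $C_2$), and divide. You are in fact more careful than the paper on the one delicate point, since the paper relegates the validity of Theorem~\ref{theorem:R_n} under $n$-dependent relay power to an unproved footnote, whereas you correctly identify that the Kolmogorov conditions in Lemma~\ref{lemma:Heq} and the diagonalization of $\mathbf{W}/n$ must be rechecked with the $\sqrt{\Pr}$ prefactor absorbed into the summands.
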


Theorem~\ref{theorem:RScale} holds for a broad class of clustering
schemes, as long as the corresponding $a_m$ and $b_m$ are bounded but
non-zero, i.e., the secondary end-to-end equivalent channel is diagonalized
with probability 1 as $n$ grows.

\begin{remark}
\label{remark:peak}

It is possible to extend our results to the case of peak interference
constraint $\gamma$. The secondary source will manage its instantaneous
interference to be smaller than $\gamma_s$ on all primary nodes by
adjusting its transmit power according to the largest cross-channel
gain to the primaries. Then, the sum interference from all the relays must be smaller
than $\gamma_r = \gamma - \gamma_s$. Let $\Pr = \xi/n$ where $\xi$ is a
positive constant. The instantaneous interference from all the relays to the
primary node $\ell$ is $\gamma_{\ell} = \xi \sum_{i=1}^nT_i |g_{\ell
  i}|^2 /n$. This implies that
\begin{align}
\gamma_{\ell} - \xi \; \Ex[T_i \;|g_{\ell i}|^2 ]   \stackrel{w.p.1}{\longrightarrow} 1\nonumber
\end{align}
for an arbitrary $ i\in\{1,\ldots,n\}$, where we have used the fact
that $T_i$ and $g_{\ell i}$ for all $i$
have identical distributions. Therefore, $\xi= \gamma_r \big( \Ex[T_i
  \;|g_{\ell i}|^2 ] \big)^{-1}$ ensures the instantaneous interference
on all the primary nodes to be smaller than $\gamma$ with probability
$1$.
\end{remark}

\subsection{Optimal Source Power}

Due to the primary interference constraints, for any chosen $\alpha$
the higher the source power $\Ps$, the lower the relay power $\Pr$,
and vice versa. From~\eqref{eq:R^{(f)}} and~\eqref{eq:R^{(g)}}, the
rate-maximizing $\Ps$ is
\begin{equation}
\Ps^* = \arg\max_{0<\Ps <
\min(\bar{\Ps},2\gamma)}\frac{(2\gamma - \sigma_{sp}^2 \Ps)\Ps}{ (2\gamma - \sigma_{sp}^2\Ps)\sigma_d^2 +
  (\sigma_s^2\Ps+1)f(\alpha)}. 
\end{equation}
The unique solution of the above optimization problem is 
\begin{equation}
\Ps^* = \min(P_o,\bar{\Ps}), \label{eq:optPs}
\end{equation}
where $P_0 = \frac{\gamma}{\sigma_{sp}^2} $ if
$\sigma_{sp}^2\sigma_d^2=\sigma_s^2f(\alpha)$, otherwise:
\begin{align}
P_o= &
\frac{2\gamma\sigma_d^2}{\sigma_{sp}^2\sigma_d^2-\sigma_s^2f(\alpha)}
\nonumber\\& -
  \frac{
    \sqrt{ \big(2\gamma\sigma_d^2f(\alpha)+f^2(\alpha)\big)\big(\sigma_{sp}^2+2\gamma\sigma_s^2\big)}}{\sigma_{sp}\big(\sigma_{sp}^2\sigma_d^2-\sigma_s^2f(\alpha)\big)}
\end{align}

Figure~\ref{fig:optPsTheory} demonstrates the optimal source power as
a function of three channel parameters $\sigma_{sp}^2$, $\sigma_s^2$
and $\sigma_d^2$. Three curves are shown, in each case one parameter
varies while the other two are held constant (at unity). In this
Figure $\bar{\Ps}=10$, $\gamma=5$ and $f(\alpha)=0.8$.  As the
source-primary channels become stronger, the source needs to reduce
power; otherwise, the relay power must decrease to comply with the
primary interference constraints, which curbs the rate achieved by the
second hop. If the source-relay channels become stronger, the
relay-destination links is the bottleneck and the relays need to
transmit at higher power, thus once again the source needs to reduce
power. In contrast, when the relay-destination channels become better,
the source-relay channels are the bottleneck so the source needs to
increase power.

\begin{figure}
\centering
\includegraphics[width=3.5in]{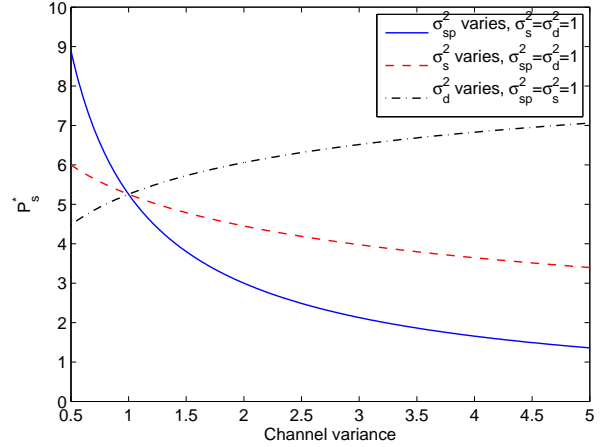}
\caption{Optimal source power with $\gamma=5, f(\alpha)=0.8$}
\label{fig:optPsTheory}
\end{figure}

\subsection{Asymptotic Reduction of Interference on Primary}
Multiple relays produce opportunities not only to enhance the
secondary rate but also to reduce the interference on the
primary. Suppose the interference on the primary nodes to be bounded
as $ \gamma = O\big(n^{-\delta}\big), $ which goes to zero as
$n\rightarrow\infty$. From~\eqref{eq:IntConstraint}, it is sufficient
to comply with this constraint if $\Pr$ decreases as
$\Theta(n^{-(1+\delta)})$ and $\Ps$ decreases as
$\Theta(n^{-\delta})$. Substituting $\Pr$ and $\Ps$ into the
expression of $R^{(f)}$ given by~\eqref{eq:R^{(f)}} and following some
order calculation (the analysis of $R^{(g)}$ is the same thus
omitted), we have
\begin{equation}
R^{(f)} = \begin{cases} 
\frac{M(1-2\delta)}{2}\log n + O(1) &\mbox{$\delta <\frac{1}{2}$}
\\ O(1) &\mbox{$\delta \ge \frac{1}{2}$}
\end{cases}
\end{equation}

The above equation characterizes the trade-off between the secondary
rate and the interference on the primary: the faster of the
interference reduction, the slower of the rate growth. It also shows
that the interference on the primary nodes may be mitigated (to zero
asymptotically), while the secondary rate maintains to increase as
$\Theta(\log n)$.
\begin{remark}
In the above, the allowable interference $\gamma$ is made to decline
as $\Theta(n^{-\delta})$, which leads the growth rate to decrease
linearly in $\delta$. If $\gamma$ is reduced more slowly, e.g.,
decreasing as $\Theta(\frac{1}{\log n})$, the secondary rate can
increase at a rate of $\frac{M}{2}\log n$. If we try to mitigate the
primary interference faster than $\Theta(1/\sqrt{n})$, the secondary
rate will not increase logarithmically with $n$.
\end{remark}

\section{Spectrum-sharing with Alternating Relay Protocol}

In this section we consider issues raised by the relay half-duplex
constraint, i.e., limitations that arise because relays cannot listen
to the source at the same time as they are transmitting. When a subset
of relays are activated for relaying the previously received
information, the inactive relays are able to listen and receive
information from the source, thus in principle the source can transmit
continually and the half-duplex loss can be mitigated. This is the
basic idea of {\em spectrum sharing with Alternating Relay Protocol},
which is the subject of this section.

The protocol consists of $\NumFrames$ transmission frames, as shown in
Figure~\ref{fig:Protocol}. It is assumed the channel coefficient
remains constant during each frame, but varies independently from
frame to frame. The source transmits during frames $1$ through
$\NumFrames-1$, and remains silent during frame $\NumFrames$. Since
the source transmits $\NumFrames-1$ data segments during $\NumFrames$
time intervals, the rate loss induced by the half-duplex relaying is a
factor of $\frac{\NumFrames-1}{\NumFrames}$. The relays are
partitioned into two groups $\mathcal{G}_1=\{1\le i\le \frac{n}{2}\}$
and $\mathcal{G}_2=\{\frac{n}{2}+1\le i\le n\}$. During even-numbered
transmission frames a subset of the relays in $\mathcal{G}_1$ transmit
to the destination, while the relays in $\mathcal{G}_2$ listen to the
source. During odd-numbered transmission frames, a subset of the
relays in $\mathcal{G}_2$ transmit, while the relays in
$\mathcal{G}_1$ listen.  As shown later, each of the two relay groups
asymptotically achieves a rate that grows as
$\frac{M}{2}\frac{\NumFrames-1}{\NumFrames}\log n$, thus the overall
system has a rate that grows proportionally to
$M\frac{\NumFrames-1}{\NumFrames}\log n$. Therefore a good part of the
half-duplex rate loss can be recovered.

\begin{figure} 
\centering 
\includegraphics[width=3.5in]{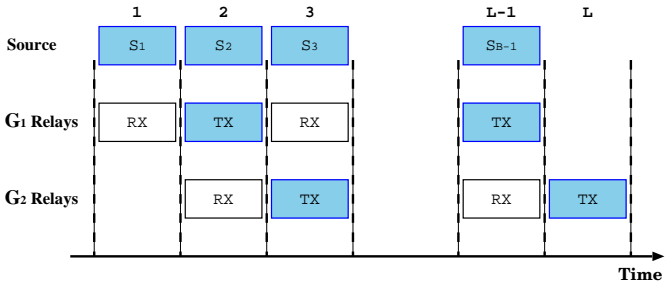}
\caption{Transmission schedule in the alternating relay protocol (ARP) }
\label{fig:Protocol}
\end{figure}

When either group
$\mathcal{G}_1$ or group $\mathcal{G}_2$ is in the transmit mode, a
subset of relays in the corresponding group is selected to 
transmit. A relay is selected (eligible) if its interference links
satisfy~\eqref{eq:eligible}, similar to
Section~\ref{subsec:eligibleRelay}. The average interference power on
the primary node $\ell$ takes slightly different forms depending on
whether $\NumFrames$ is even or odd. When $\NumFrames$ is even:
\begin{align}
\gamma_{\ell} = &\frac{(\NumFrames-1)\sigma_{sp}^2\Ps}{\NumFrames}+ \frac{\Pr}{2}\sum_{i\in\mathcal{G}_1} \Ex\big[ |g_{\ell i}|^2\big
    | T_i = 1\big]  \nonumber\\
&\qquad +\frac{(\NumFrames-2)\Pr}{2\NumFrames}\sum_{i\in\mathcal{G}_2} \Ex\big[ |g_{\ell i}|^2\big
    | T_i = 1\big],
\end{align}
and when $\NumFrames$ is odd:
\begin{align}
\gamma_{\ell} =&\frac{(\NumFrames-1)\sigma_{sp}^2\Ps}{\NumFrames}+\frac{(\NumFrames-1)\Pr}{2\NumFrames}
  \bigg(\sum_{i\in\mathcal{G}_1} \Ex\big[ |g_{\ell i}|^2\big
    | T_i = 1\big] \nonumber\\
& \qquad +\sum_{i\in\mathcal{G}_2} \Ex\big[ |g_{\ell i}|^2\big
    | T_i = 1\big] \bigg) \; .
\end{align}
To comply with the interference constraints on the primary nodes, the
threshold $\alpha$ and the relay power $\Pr$ shall satisfy
\begin{equation}
n \Pr f(\alpha)
\le  \max(\gamma_L,0), \label{eq:IntConstraint3}
\end{equation}
where $\gamma_\NumFrames= \frac{2\NumFrames}{\NumFrames-1}\gamma -
2\sigma_{sp}^2\Ps $ with $\Ps$ so that $\gamma_\NumFrames>0$, and we
use the fact that $\Ex\big[ |g_{\ell i}|^2\big | T_i =
  1\big]=f(\alpha)$. Since from Section~\ref{sec:optimal} the optimal
$\Pr$ is proportional to $n^{-1}$, we let $\Pr=\eta/n$, and
re-write~\eqref{eq:IntConstraint3} as
\begin{equation}
\eta f(\alpha)
\le  \max(\gamma_L,0). \label{eq:IntConstraint4}
\end{equation}

For the Alternating Relay Protocol, relay clustering is accomplished
in a manner similar to Section~\ref{subsec:relayClustering}, therefore
the details are omitted. During frame $2k$ (or $2k+1$), let
$\mathcal{G}_{m,k}^{(1)}\subset \mathcal{G}_1$ (or
$\mathcal{G}_{m,k}^{(2)}\subset \mathcal{G}_2$) be the set of relays
that assists the antenna pair $m$. As long as $\min_{m,k,d}
|\mathcal{G}_{m,k}^{(d)}|\rightarrow \infty$, the secondary rate will
be obtained following the analysis similar to Section~\ref{sec:rate}.

\begin{remark}
\label{remark:general}
At any point in time, it is possible to allow {\em all}
non-transmitting relays to listen to the source, and be eligible to
transmit in the next frame. This may give some gains, however, it also
complicates the relay selection by introducing dependence between not
only interference links but also other links such as source-relay and
relay-relay links. It may be better for a relay even with a small
interference on primary to remain inactive if it has also a weak
channel to destination (therefore it cannot help much) but has a
strong channel to the source (therefore it can listen well for the
next round). Thus, any gains will come with a loss of elegance and
tractability, and therefore this approach is not considered in this
paper.
\end{remark}

\subsection{A Simple Example: $\NumFrames=3$}

For illustration purposes, we consider $\NumFrames=3$, where $\mathcal{G}_1$
($\mathcal{G}_2$) listen to the source during frame $1$, and then
transmit to the destination during frame $2$ (frame $3$). We assume
fixed clustering is used with $|\mathcal{G}_{m,1}^{(d)}|=n/(2M)$, for
$1\le m\le M$ and $1\le d\le 2$. Let $\Hsr_d$ ($\Hrd_d$) be the
channel coefficient matrix between the relays in $\mathcal{G}_d$ and
the source (the destination), and $\Hrr$ be the channel coefficient
matrix between $\mathcal{G}_1$ and $\mathcal{G}_2$ with
i.i.d. $\mathcal{CN}(0,\sigma_r^2)$ entries.


We now analyze the rate achieved under Alternating Relay Protocol. The
optimization of the threshold and the source power follows in a manner
similar to Section~\ref{sec:optimal} and thus is omitted here.

\subsubsection{Rate Achieved by $\mathcal{G}_1$}
After listening to the source at frame $1$, $\mathcal{G}_1$ relays to
the destination at frame $2$. At the end of frame $2$, similar
to~\eqref{eq:rxd} the received signal at the destination is
\begin{align}
\rx_1 & = \sqrt{\frac{\Ps}{M}} \underset{\Heq_1}{\underbrace{\Hrd_1
    \TR_1 \Hsr_1}} \tx_1 + \underset{\weq_1}{\underbrace{\Hrd_1 \TR_1
    \mathbf{n}_1 + \mathbf{w}_1}}, \label{eq:rxd1}
\end{align}
where $\tx_1$ is the signal sent by the source during frame $1$,
$\mathbf{n}_1$ is the noise forwarded by the group $\mathcal{G}_1$ of
relays, $\mathbf{w}_1$ is the destination noise. For the group
$\mathcal{G}_1$ the relay gains are collected into the relay
processing matrix
\begin{equation}
\TR_1 =\diag(T_1c_1,\cdots,T_{\frac{n}{2}}c_{\frac{n}{2}}),
\end{equation}
where $c_i$ is given by~\eqref{eq:c_i} so
that the average relay power constraints are satisfied. One can verify
that the equivalent channel $\Heq_1$
\begin{equation}
\frac{\Heq_1}{\sqrt{n}}  \stackrel{w.p.1}{\longrightarrow} \rho_1 \bI,
\end{equation}
where $\rho_1=
\frac{\pi\sigma_s\sigma_d}{8M}\sqrt{\frac{p\eta}{\sigma_s^2\Ps+1}}$. The
auto-covariance of equivalent noise $\weq_1$ is
\begin{equation}
\frac{1}{n}\auto_1  \stackrel{w.p.1}{\longrightarrow} \lambda_1 \bI,
\end{equation}
where $\lambda_1=
\frac{\eta\sigma_d^2}{2(\sigma_s^2\Ps+1)}+1$. Therefore, the
end-to-end channel is diagonalized for large $n$, and similar to the
results in Theorem~\ref{theorem:R_n}, the rate achieved $R^{(1)}$
during frame $2$ satisfies:
\begin{equation}
R^{(1)} - M\log\bigg( 1+ \frac{np
  \pi^2\sigma_s^2\sigma_d^2\eta\Ps}{32M^3(\eta\sigma_d^2+2\sigma_s^2\Ps+2)}
\bigg) \stackrel{w.p.1}{\longrightarrow} 0. \label{eq:R_{r1}}
\end{equation}

\subsubsection{Rate Achieved by $\mathcal{G}_2$}
During frame $2$, the relays in $\mathcal{G}_2$ receive the signal
vector:
\begin{align}
\rrx & = \sqrt{\frac{\Ps}{M}} \Hsr_2 \tx_2 + \Hrr \TR_1\bigg(
\sqrt{\frac{\Ps}{M}} \Hsr_1 \tx_1+
\mathbf{n}_1\bigg)+\mathbf{n}_2, \label{eq:rxr2}
\end{align}
where $\tx_2$ is the signal sent by the source during frame $2$, and
the second term corresponds to the interference from the transmission
of $\mathcal{G}_1$. During frame $3$ the relays in $\mathcal{G}_2$
transmit to the destination with a processing matrix
\begin{equation}
\TR_2 =\diag(T_{\frac{n}{2}+1}\,c_{\frac{n}{2}+1},\cdots,T_nc_n),
\end{equation}
where, to satisfy the power constraints, for $\frac{n}{2}+1\le i\le n$
\begin{equation}
c_i = e^{j\theta_i} \sqrt{\frac{\eta}{np(\Ps\sigma_s^2+\eta\sigma_r^2/2+1)}}.
\end{equation}
At the end of frame $3$, the received signal at the destination is
\begin{align}
\rx_2  = &\Hrd_2\TR_2\rrx + \mathbf{w}_2\nonumber
\\  = &\sqrt{\frac{\Ps}{M}} \underset{\Heq_2}{\underbrace{\Hrd_2 \TR_2
    \Hsr_2}} \tx_2  + \sqrt{\frac{\Ps}{M}} \Hrd_2\TR_2\Hrr\TR_1\Hsr_1\tx_1
\nonumber\\&
+ \underset{\weq_2}{\underbrace{\Hrd_2\TR_2\Hrr\TR_1 \mathbf{n}_1+ \Hrd_2 \TR_2 \mathbf{n}_2 +
    \mathbf{w}_2}} \quad . \label{eq:rxd2}
\end{align}

After correctly decoding $\tx_1$, the destination cancels the
inter-relay interference,\footnote{Interference cancellation requires
  knowledge of $\Hrr$ at the destination, however, we note that even
  without this knowledge it is possible to obtain the same scaling of
  secondary throughput with the number of relays. Intuitively, the
  inter-relay interference is bounded by a constant that is under our
  control.}
i.e., the second term in~\eqref{eq:rxd2}. After eliminating the
inter-relay interference, we have an equivalent channel:
\begin{equation}
\rx_2 = \sqrt{\frac{\Ps}{M}} \Heq_2 \tx_2 + \weq_2.
\end{equation}
Following steps similar to~\eqref{eq:HeqDiag}
and~\eqref{eq:am}, we have
\begin{equation}
\frac{\Heq_2}{\sqrt{n}}  \stackrel{w.p.1}{\longrightarrow} \rho_2 \bI, \label{eq:HeqARP}
\end{equation}
where $\rho_2=
\frac{\pi\sigma_s\sigma_d}{8M}\sqrt{\frac{p\eta}{\sigma_s^2\Ps+\eta\sigma_r^2/2+1}}$. Note
that $\weq_2$ is still a zero-mean Gaussian vector with
auto-covariance
\begin{equation}
\frac{1}{n} \auto_2 =
\Hrd_2\TR_2\Hrr\TR_1\TR_1^{\dag}\Hrr^{\dag}\TR_2^{\dag}\Hrd_2^{\dag} +
\Hrd_2\TR_2\TR_2^{\dag}\Hrd_2^{\dag} + \bI.\label{eq:WeqARP}
\end{equation}
In the right hand side of the above equation, we have
\begin{align}
\Hrr\TR_1\TR_1^{\dag}\Hrr^{\dag} & = \frac{\eta}{np(\Ps\sigma_s^2+1)}
\Hrr\,\diag(T_1,\cdots,T_{\frac{n}{2}})\,\Hrr^{\dag}
\nonumber\\ & \stackrel{w.p.1}{\longrightarrow}
\frac{\eta\sigma_r^2}{2(\Ps\sigma_s^2+1)} \bI.
\end{align}
Therefore,
\begin{equation}
\auto_2 \stackrel{w.p.1}{\longrightarrow} \lambda_2 \bI,
\end{equation}
where 
\begin{align}
\lambda_2=&\frac{1}{2(\Ps\sigma_s^2+\eta\sigma_r^2/2+1)}\bigg[\frac{\eta^2\sigma_d^2\sigma_r^2}{2(\Ps\sigma_s^2+1)}+\eta\sigma_d^2\bigg]
+1.
\end{align}
Combining~\eqref{eq:HeqARP} and~\eqref{eq:WeqARP} , the rate achieved
by $\mathcal{G}_2$ is $R^{(2)}$ where
\begin{equation}
R^{(2)} -  M \log\big( 1+ \frac{n\Ps \rho_2^2}{M \lambda_2}\big)  \stackrel{w.p.1}{\longrightarrow} 0. \label{eq:R_{r2}}
\end{equation}
The overall rate is given by the following theorem.
\begin{theorem}
\label{theorem:R_full}
Consider a secondary system with an $M$-antenna source, an $M$-antenna
destination, and $n$ single-antenna relays, in the presence of $N$
primary nodes each tolerating interference no more than $\gamma$. The
secondary rate satisfies
\begin{equation}
\overline{R} - \big( R^{(1)}+R^{(2)}\big)/3
\stackrel{w.p.1}{\longrightarrow} 0,
\end{equation}
under the Alternating Relaying Protocol with $\NumFrames=3$ and fixed
clustering.
\end{theorem}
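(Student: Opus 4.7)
The plan is to write $\overline{R}$ as the time-normalized sum of two independently achievable rates, one per data block. Over a $3$-frame cycle the source injects two independent codewords $\tx_1$ and $\tx_2$ during frames $1$ and $2$, and the destination collects two noisy observations: $\rx_1$ at the end of frame $2$ carrying only $\tx_1$, and $\rx_2$ at the end of frame $3$ carrying $\tx_2$ together with the inter-relay echo of $\tx_1$. Successive decoding — decode $\tx_1$ from $\rx_1$, subtract its known contribution from $\rx_2$, then decode $\tx_2$ — achieves the sum of the two individual rates, and normalizing by three frame durations accounts for the factor $1/3$.

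For the first message I would invoke the machinery of Theorem~\ref{theorem:R_n} applied to $\mathcal{G}_1$ with sub-group sizes $|\mathcal{G}_{m,1}^{(1)}|=n/(2M)\to\infty$. The eligibility variables $T_i$ and rotations $\theta_i$ are independent of $\Hsr_1$ and $\Hrd_1$, so the Kolmogorov-type SLLN used in Lemma~\ref{lemma:Heq} and the diagonalizations~\eqref{eq:HeqDiag}--\eqref{eq:WeqDiag} apply entrywise and give $\Heq_1/\sqrt{n}\to \rho_1 \bI$ and $\auto_1/n\to \lambda_1 \bI$ almost surely. Substituting these scalars into the standard MIMO rate formula and applying continuity of $\log\det$ produces $R^{(1)}$ as in~\eqref{eq:R_{r1}}.

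For the second message, after the destination correctly decodes $\tx_1$ it cancels $\sqrt{\Ps/M}\,\Hrd_2 \TR_2 \Hrr \TR_1 \Hsr_1 \tx_1$ from $\rx_2$, leaving the equivalent channel $\rx_2 = \sqrt{\Ps/M}\,\Heq_2 \tx_2 + \weq_2$. The same SLLN yields $\Heq_2/\sqrt{n}\to\rho_2\bI$. For the residual noise covariance~\eqref{eq:WeqARP}, the only new ingredient is the cascade $\Hrd_2 \TR_2 \Hrr \TR_1 \TR_1^\dag \Hrr^\dag \TR_2^\dag \Hrd_2^\dag$. I would handle it in two conditional passes: first, holding $\TR_1$ fixed and exploiting that it is independent of $\Hrr$, apply Lemma~\ref{lemma:Heq} along rows of $\Hrr$ to obtain $\Hrr\TR_1\TR_1^\dag\Hrr^\dag/n \to \frac{\eta\sigma_r^2}{2(\Ps\sigma_s^2+1)}\bI$ almost surely; second, repeat the argument for the outer sandwich $\Hrd_2 \TR_2(\cdot)\TR_2^\dag\Hrd_2^\dag$. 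Combined with the already-diagonalized $\Hrd_2\TR_2\TR_2^\dag\Hrd_2^\dag$ and the identity, this produces $\auto_2/n\to\lambda_2 \bI$, whence $R^{(2)}$ as in~\eqref{eq:R_{r2}}.

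The main obstacle is this cascaded concentration: the product $\Hrd_2 \TR_2 \Hrr \TR_1$ couples four random objects, so each SLLN invocation must be carried out conditionally to preserve the independence needed by Lemma~\ref{lemma:Heq}. The remaining ingredients are standard: the cancellation step is effectively noise-free because the probability that $\tx_1$ is decoded in error vanishes once the coding rate is chosen strictly below $R^{(1)}$, and $\overline{R} - (R^{(1)}+R^{(2)})/3 \stackrel{w.p.1}{\longrightarrow} 0$ then follows by plugging $(\rho_1,\lambda_1)$ and $(\rho_2,\lambda_2)$ into the Gaussian MIMO capacity formula and normalizing by the three-frame time slot.
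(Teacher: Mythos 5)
Your proposal is correct and follows essentially the same route as the paper: decompose the three-frame cycle into the two per-block rates with the $1/3$ time-sharing factor, establish $\Heq_1/\sqrt{n}\to\rho_1\bI$, $\auto_1/n\to\lambda_1\bI$ and (after cancelling the inter-relay term using the decoded $\tx_1$) $\Heq_2/\sqrt{n}\to\rho_2\bI$, $\auto_2/n\to\lambda_2\bI$ via the SLLN machinery of Lemma~\ref{lemma:Heq}, then plug into the Gaussian MIMO rate formula. Your explicit treatment of the cascaded concentration $\Hrr\TR_1\TR_1^{\dag}\Hrr^{\dag}$ by conditioning is exactly the step the paper carries out (note only that the $1/n$ normalization there is already absorbed into $c_i^2$, so no extra division by $n$ is needed).
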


From Theorem~\ref{theorem:R_full}, the growth rate of $\overline{R}$ is
\begin{equation}
\frac{\overline{R}}{\frac{2M}{3}\log n} \stackrel{w.p.1}{\longrightarrow} 1.
\end{equation}
\begin{remark}
\label{remark:R_full}
 Theorem~\ref{theorem:R_full} can be generalized to an arbitrary
 number of transmission blocks $\NumFrames$. For general $\NumFrames$ we can
 conclude:
\[
\frac{\overline{R}}{\frac{(\NumFrames-1)M}{\NumFrames}\log n} \stackrel{w.p.1}{\longrightarrow} 1.
\]
As $\NumFrames$ increases, the
growth rate of $\overline{R}$ approaches the maximum value of $M\log n$.
\end{remark}

\section{Numerical Results}

Unless otherwise specified, we use parameters $\bar{\Ps}=10$, $M=2$,
$N=1$, $\gamma=5$ and $\sigma_s^2=\sigma_d^2=1$.

The secondary rates as a function of source transmit power are
presented by Figure~\ref{fig:Ps}. The theoretical rate under various
$\Ps$ is calculated according to~\eqref{eq:R^{(f)}}
and~\eqref{eq:R^{(g)}}. Recall that the theoretically optimal $\Ps$
given by~\eqref{eq:optPs} is obtained by~\eqref{eq:R^{(f)}}
and~\eqref{eq:R^{(g)}}. When the source interference links are very
weak, e.g., $\sigma_{sp}^2=0.1$, maximizing the source power is
optimal, which is similar to non-spectrum-sharing networks. When the
source interference links is strong, e.g., $\sigma_{sp}^2=1,2$, unlike
non-spectrum-sharing networks, the secondary achieves higher rate if
the source transmit at power lower than the maximum value. This is
because the source needs to ensure the relays can operate with
sufficient power, subject to the total interference constraints on the
primary nodes.

\begin{figure} 
\centering \includegraphics[width=3.5in]{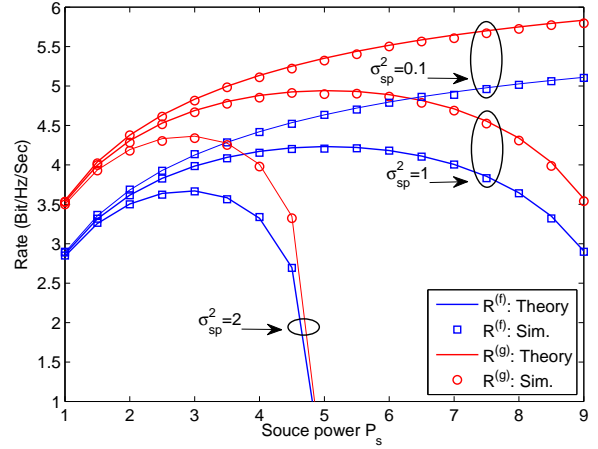}
\caption{Throughput as a function of source power when $n=100, \sigma_2=\sigma_d=\sigma_p=1$ }
\label{fig:Ps}
\end{figure}

Figure~\ref{fig:R} verifies Theorem~\ref{theorem:R_n} under fixed
clustering and gain clustering. Here, $\alpha=\gamma$,
$\sigma_{sp}^2=1$, $\Pr$ is given by~\eqref{eq:Pr} and $\Ps=5$, which
is almost optimal as shown in Figure~\ref{fig:Ps}. The simulated
average rate of $R_n$ under two clustering schemes are
compared to $R^{(f)}$ given by~\eqref{eq:R^{(f)}} and $R^{(g)}$ given
by~\eqref{eq:R^{(g)}}, respectively, where the results are well
matched for modest value of $n$. The secondary rate increases as the
interference links of relays become weaker (smaller $\sigma_p^2$),
since the relays can transmit at higher power (but the sum relay power
is still bounded with $n$).

\begin{figure}
\centering \includegraphics[width=3.5in]{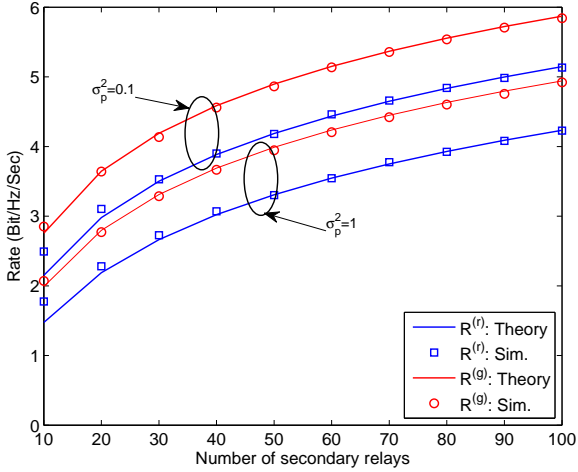}
\caption{Secondary rate under two clustering schemes }
\label{fig:R}
\end{figure}

Figure~\ref{fig:int} illustrates the tradeoff between maximizing
secondary rate and minimizing interference on the primary. The
interference power is $\gamma=5(n)^{-\delta}$ with $\delta=0.1$ and
$0.2$, respectively. For $\delta=0.2$, the interference power
decreases faster than $\delta=0.1$, while the secondary rate increases
more slowly. 
\begin{figure}
\centering
\includegraphics[width=3.5in]{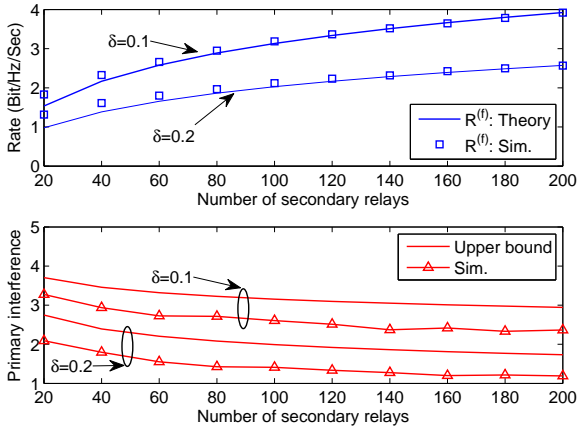}
\caption{Secondary rate and primary interference as a function of
  number of relays}
\label{fig:int}
\end{figure}

The rate of Alternating Relaying Protocol
(Theorem~\ref{theorem:R_full}) is shown in
Figure~\ref{fig:rateARP}. We consider $\alpha\rightarrow
\infty$, where all the relays in $\mathcal{G}_1$ and $\mathcal{G}_2$
transmit alternatively. Here, $\Ps=5$ and $\Pr$ is determined
by~\eqref{eq:IntConstraint3}. The simulated rates match the theoretic
analysis well under modest value of $n$. As the relay-relay channel
becomes weaker (smaller $\sigma_r^2$), the inter-relay interference is
reduced, and thus the secondary rate increases.

\begin{figure}
\centering
\includegraphics[width=3.5in]{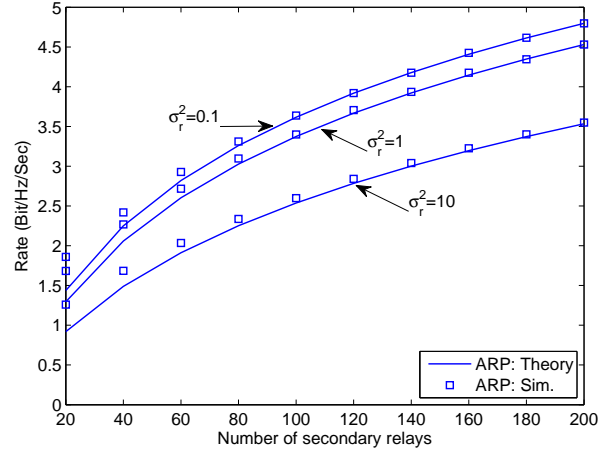}
\caption{Secondary rate under the alternating relaying protocol}
\label{fig:rateARP}
\end{figure}

\section{Conclusion}

This paper studies spectrum sharing networks with distributed AF
relaying to improve the secondary rate and reduce the interference on
the primary. In the asymptote of large $n$ (number of relays) the
optimal power strategy for the secondary source and relays was found,
achieving a secondary rate proportionally to $\log n$. The half-duplex
rate loss was reduced and the scaling of secondary rate was enhanced
by the introduction of the Alternating Relay Protocol. The trade-off
between the secondary rate and the interference on the primary was
characterized. Finally, our results show that even without cross
channel information at the secondary, the secondary rate can achieve
the growth rate $\log n$.

\bibliographystyle{IEEEtran} \bibliography{IEEEabrv,LiYang-final}

\begin{thebibliography}{10}
\providecommand{\url}[1]{#1}
\csname url@rmstyle\endcsname
\providecommand{\newblock}{\relax}
\providecommand{\bibinfo}[2]{#2}
\providecommand\BIBentrySTDinterwordspacing{\spaceskip=0pt\relax}
\providecommand\BIBentryALTinterwordstretchfactor{4}
\providecommand\BIBentryALTinterwordspacing{\spaceskip=\fontdimen2\font plus
\BIBentryALTinterwordstretchfactor\fontdimen3\font minus
  \fontdimen4\font\relax}
\providecommand\BIBforeignlanguage[2]{{%
\expandafter\ifx\csname l@#1\endcsname\relax
\typeout{** WARNING: IEEEtran.bst: No hyphenation pattern has been}%
\typeout{** loaded for the language `#1'. Using the pattern for}%
\typeout{** the default language instead.}%
\else
\language=\csname l@#1\endcsname
\fi
#2}}

\bibitem{Ghasemi2007}
A.~Ghasemi and E.~S. Sousa, ``Fundamental limits of spectrum-sharing in fading
  environments,'' \emph{{IEEE} Trans. Wireless Commun.}, vol.~6, no.~2, pp.
  649--658, Feb. 2007.

\bibitem{Gastpar2007}
M.~Gastpar, ``On capacity under receive and spatial spectrum-sharing
  constraints,'' \emph{{IEEE} Trans. Inform. Theory}, vol.~53, no.~2, pp.
  471--487, Feb. 2007.

\bibitem{Ozgur2007}
A.~\"{O}zg\"{u}r, O.~L\'{e}v\^{e}que, and D.~Tse, ``Hierarchical cooperation
  achieves optimal capacity scaling in ad hoc networks,'' \emph{{IEEE} Trans.
  Inform. Theory}, vol.~53, no.~10, pp. 3549 --3572, Oct. 2007.

\bibitem{Liang2007}
Y.~Liang and V.~V. Veeravalli, ``Cooperative relay broadcast channels,''
  \emph{{IEEE} Trans. Inform. Theory}, vol.~53, no.~3, pp. 900 --928, Mar.
  2007.

\bibitem{Chen2008}
D.~Chen, K.~Azarian, and J.~Laneman, ``A case for amplify-forward relaying in
  the block-fading multiple-access channel,'' \emph{{IEEE} Trans. Inform.
  Theory}, vol.~54, no.~8, pp. 3728 --3733, Aug. 2008.

\bibitem{Sahin2011}
O.~\c{S}ahin, O.~Simeone, and E.~Erkip, ``Interference channel with an
  out-of-band relay,'' \emph{{IEEE} Trans. Inform. Theory}, vol.~57, no.~5, pp.
  2746 --2764, May 2011.

\bibitem{Zhao2009}
G.~Zhao, J.~Ma, G.~Li, T.~Wu, Y.~Kwon, A.~Soong, and C.~Yang, ``Spatial
  spectrum holes for cognitive radio with relay-assisted directional
  transmission,'' \emph{{IEEE} Trans. Wireless Commun.}, vol.~8, no.~10, pp.
  5270 --5279, Oct. 2009.

\bibitem{Mietzner2009}
J.~Mietzner, L.~Lampe, and R.~Schober, ``Distributed transmit power allocation
  for multihop cognitive-radio systems,'' \emph{{IEEE} Trans. Wireless
  Commun.}, vol.~8, no.~10, pp. 5187 --5201, Oct. 2009.

\bibitem{Asghari2010}
V.~Asghari and S.~Aissa, ``Cooperative relay communication performance under
  spectrum-sharing resource requirements,'' in \emph{ICC}, May 2010.

\bibitem{Li2011}
L.~Li, X.~Zhou, H.~Xu, G.~Li, D.~Wang, and A.~Soong, ``Simplified relay
  selection and power allocation in cooperative cognitive radio systems,''
  \emph{{IEEE} Trans. Wireless Commun.}, vol.~10, no.~1, pp. 33 --36, Jan.
  2011.

\bibitem{Naeem2010}
M.~Naeem, D.~Lee, and U.~Pareek, ``An efficient multiple relay selection scheme
  for cognitive radio systems,'' in \emph{ICC, IEEE}, May 2010.

\bibitem{Zou2010}
Y.~Zou, J.~Zhu, B.~Zheng, and Y.-D. Yao, ``An adaptive cooperation diversity
  scheme with best-relay selection in cognitive radio networks,'' \emph{{IEEE}
  Trans. Signal Processing}, vol.~58, no.~10, pp. 5438 --5445, Oct. 2010.

\bibitem{Lee2011}
J.~Lee, H.~Wang, J.~Andrews, and D.~Hong, ``Outage probability of cognitive
  relay networks with interference constraints,'' \emph{{IEEE} Trans. Wireless
  Commun.}, vol.~10, no.~2, pp. 390 --395, Feb. 2011.

\bibitem{Bolcskei2006}
H.~B\"{o}lcskei, R.~Nabar, O.~Oyman, and A.~Paulraj, ``Capacity scaling laws in
  {MIMO} relay networks,'' \emph{{IEEE} Trans. Wireless Commun.}, vol.~5,
  no.~6, pp. 1433 --1444, June 2006.

\bibitem{Scaglione2003}
A.~Scaglione and Y.-W. Hong, ``Opportunistic large arrays: Cooperative
  transmission in wireless multihop ad hoc networks to reach far distances,''
  \emph{{IEEE} Trans. Signal Processing}, vol.~51, no.~8, pp. 2082 -- 2092,
  Aug. 2003.

\bibitem{Zhang2009}
R.~Zhang, S.~Cui, and Y.-C. Liang, ``On ergodic sum capacity of fading
  cognitive multiple-access and broadcast channels,'' \emph{{IEEE} Trans.
  Inform. Theory}, vol.~55, no.~11, pp. 5161--5178, Nov. 2009.

\bibitem{Yang2010a}
Y.~Li and A.~Nosratinia, ``Capacity limits of multiuser multiantenna cognitive
  networks,'' \emph{{IEEE} Trans. Inform. Theory}, vol.~58, no.~7, pp. 4493
  --4508, July 2012.

\bibitem{Zhang2009a}
R.~Zhang, ``On peak versus average interference power constraints for
  protecting primary users in cognitive radio networks,'' \emph{{IEEE} Trans.
  Wireless Commun.}, vol.~8, no.~4, pp. 2112 --2120, Apr. 2009.

\bibitem{Jing2006}
Y.~Jing and B.~Hassibi, ``Distributed space-time coding in wireless relay
  networks,'' \emph{{IEEE} Trans. Wireless Commun.}, vol.~5, no.~12, pp. 3524
  --3536, Dec. 2006.

\bibitem{Gupta2000}
P.~Gupta and P.~Kumar, ``The capacity of wireless networks,'' \emph{{IEEE}
  Trans. Inform. Theory}, vol.~46, no.~2, pp. 388 --404, Mar. 2000.

\bibitem{Sharif2005}
M.~Sharif and B.~Hassibi, ``On the capacity of {MIMO} broadcast channels with
  partial side information,'' \emph{{IEEE} Trans. Inform. Theory}, vol.~51,
  no.~2, pp. 506 -- 522, Feb. 2005.

\bibitem{Jayaweera2009}
S.~Jayaweera and T.~Li, ``Dynamic spectrum leasing in cognitive radio networks
  via primary-secondary user power control games,'' \emph{{IEEE} Trans.
  Wireless Commun.}, vol.~8, no.~6, pp. 3300 --3310, June 2009.

\bibitem{Gutbook}
A.~Gut, \emph{Stopped Random Walks: Limit Theorems and Applications}.\hskip 1em
  plus 0.5em minus 0.4em\relax Springer, 2008.

\bibitem{Serflingbook}
R.~J. Serfling, \emph{Approximation Theorems of Mathematical Statistics}.\hskip
  1em plus 0.5em minus 0.4em\relax New York: Wiley, 1980.

\end{thebibliography}

\end{document}